\renewenvironment{abstract}
{\small
\vspace{-1em}
\begin{center}
\bfseries \abstractname\vspace{-.5em}\vspace{0pt}
\end{center}
\list{}{
\setlength{\leftmargin}{0.6in}%
\setlength{\rightmargin}{\leftmargin}%
}%
\item\relax}
{\endlist}
\newcommand{\otherlabel}[2]{\protected@edef\@currentlabel{#2}\label{#1}}
\newtheorem{theorem}{Theorem}[section]
\newaliascnt{lemma}{theorem}
\newtheorem{lemma}[lemma]{Lemma}
\newaliascnt{proposition}{theorem}
\newaliascnt{definition}{theorem}
\newtheorem{definition}[definition]{Definition}
\newaliascnt{corollary}{theorem}
\newtheorem{corollary}[corollary]{Corollary}
\newaliascnt{conjecture}{theorem}
\theoremstyle{remark}
\newaliascnt{claim}{theorem}
\newaliascnt{observation}{theorem}
\newaliascnt{example}{theorem}
\newaliascnt{remark}{theorem}
\def\cqedsymbol{\ifmmode$\lrcorner$\else{\unskip\nobreak\hfil
\penalty50\hskip1em\null\nobreak\hfil$\lrcorner$
\parfillskip=0pt\finalhyphendemerits=0\endgraf}\fi}
\newcommand{\problemtitle}[1]{\gdef\@problemtitle{#1}}
\newcommand{\probleminput}[1]{\gdef\@probleminput{#1}}
\newcommand{\problemquestion}[1]{\gdef\@problemquestion{#1}}
\par\addvspace{.5\baselineskip}
\par\addvspace{.5\baselineskip}}
\par\addvspace{.5\baselineskip}
\par\addvspace{.5\baselineskip}}
\def\TransEnum{\textsc{Trans-Enum}}
\def\HypergraphDual{\textsc{Hypergraph Dualization}}
\def\Dual{\textsc{Dual}}
\def\DualEnum{\textsc{DualEnum}}
\def\B{\mathcal{B}} 
\def\C{\mathcal{C}} 
\def\G{\mathcal{G}} 
\def\H{\mathcal{H}} 
\def\L{\mathcal{L}} 
\def\Sig{(X,\Sigma)} 
\newcommand{\intv}[2]{[#2]}
\DeclareMathOperator{\Min}{Min}
\DeclareMathOperator{\Max}{Max}
\DeclareMathOperator{\poly}{poly}
\DeclareMathOperator{\spex}{\mathbf{mingen}}
\DeclareMathOperator{\ex}{\mathbf{ex}}
\DeclareMathOperator{\dex}{\mathbf{min}}
\definecolor{belize}{RGB}{0, 143, 216}
\definecolor{teal}{RGB}{0, 180, 166}
\begin{document}

\title{Dualization in lattices given\\ by implicational bases\thanks{A preliminary version of this article appeared in the proceedings of the 15\textsuperscript{th} International Conference on Formal Concept Analysis, ICFCA 2019~\cite{defrain2019dualization}.	 The authors have been supported by the ANR project GraphEn ANR-15-CE40-0009.}}

\author{
    Oscar Defrain\thanks{LIMOS, Université Clermont Auvergne, France.}
    \addtocounter{footnote}{-1} 
    \and
    Lhouari Nourine\footnotemark
}

\maketitle

\begin{abstract}
	It was recently proved that the dualization in lattices given by implicational bases is impossible in output-polynomial time unless {\sf P$=$NP}.
	In this paper, we~show that this result holds even when the premises in the implicational base are of size at most two.
	Then we show using hypergraph dualization that the problem can be solved in output quasi-polynomial time whenever the implicational base has bounded independent-width, defined as the size of a maximum set of implications having independent conclusions.
	Lattices that share this property include distributive lattices coded by the ideals of an interval order, when both the independent-width and the size of the premises equal one.

	\vskip5pt\noindent{}{\bf Keywords:} lattice dualization, transversals enumeration, implicational base, distributive lattice, interval order.
\end{abstract}


\section{Introduction}\label{sec:introduction}

The dualization of a monotone Boolean function is ubiquitous in many areas of computer science including database theory, logic and artificial intelligence~\cite{eiter1995identifying,gunopulos1997data,eiter2003new,nourine2012extending}.
When defined on Boolean lattices, the problem is equivalent to the enumeration of the minimal transversals of a hypergraph, arguably one of the most important open problems in algorithmic enumeration by now~\cite{eiter1995identifying,eiter2008computational}.
In this case, the best known algorithm is due to Fredman and Khachiyan and runs in output quasi-polynomial time~\cite{fredman1996complexity}.
An enumeration algorithm is said to be running in {\em output-polynomial} time if its running time is bounded by a polynomial in the combined size of the input and the output~\cite{johnson1988generating}.
When generalized to any lattice, it was recently proved by Babin and Kuznetsov in~\cite{babin2017dualization} that the dualization is impossible in output-polynomial time unless {\sf P$=$NP}.
This result holds under two different settings, when the lattice is given by an {\em implicational base}, or by the ordered set of its {\em irreducible elements} (by its {\em context} in FCA terminology).
The first representation consists of implications of the form $A \rightarrow b$ over a ground set $X$, i.e., $A,\{b\}\subseteq X$, that express the fact that if an element contains $A$ in the lattice, then it must contain~$b$;
$A$ is called the {\em premise}, and $b$ the {\em conclusion} of the implication.
A set of implications is called {\em independent} if every of its implications has a conclusion that cannot be obtained from the premises of other implications.
The {\em dimension} of an implicational base is the size of one of its largest premise.
In~the first setting when the lattice is given by an implicational base, the observation in~\cite{babin2017dualization} is based on a result of Kavvadias et al.~\cite{kavvadias2000generating} on the intractability of enumerating the maximal models of a Horn expression.
The constructed implicational base, however, has an implication with a premise of unbounded size, and the tractability status of the dualization remained open in the case of implicational bases of bounded dimension.
In~this paper, we address this problem with the following result.

\begin{theorem}\label{thm:main-npc}
	The dualization in lattices given by implicational bases is impossible in output-polynomial time unless {\sf P$=$NP}, even for implicational bases of dimension two.
\end{theorem}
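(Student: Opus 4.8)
The plan is to give an enumeration reduction from the dualization in lattices given by \emph{arbitrary} implicational bases---shown in~\cite{babin2017dualization} to admit no output-polynomial algorithm unless $\mathsf{P}=\mathsf{NP}$---to the same problem restricted to bases of dimension two. The reduction rewrites each implication whose premise has size at least three as a chain of implications of dimension at most two over an enlarged ground set, adding fresh ``auxiliary'' elements in such a way that the closed sets of the new base are lattice-isomorphic to those of the old one; transporting the dualization instance through this isomorphism then finishes the argument.

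Given an implication $A\to b$ with $A=\{a_1,\dots,a_k\}$ and $k\ge 3$, introduce fresh elements $x_1,\dots,x_{k-1}$ and replace $A\to b$ by the ``forward'' implications
\[
\{a_1,a_2\}\to x_1,\qquad \{x_{i-1},a_{i+1}\}\to x_i\ \ (2\le i\le k-1),\qquad \{x_{k-1}\}\to b,
\]
together with the ``reverse'' implications $\{x_1\}\to a_1$, and $\{x_i\}\to a_{i+1}$ for $1\le i\le k-1$, and $\{x_i\}\to x_{i-1}$ for $2\le i\le k-1$. Carrying this out for every premise of size at least three, and leaving every implication of size at most two untouched, transforms $(X,\Sigma)$ in polynomial time into an implicational base $(X',\Sigma')$ of dimension two with $X\subseteq X'$. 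Write $\phi$ and $\phi'$ for the closure operators of $\Sigma$ and $\Sigma'$. Along each chain, the forward implications force $x_1,\dots,x_i$ into a closed set precisely when the prefix $\{a_1,\dots,a_{i+1}\}$ is present, and the reverse implications pull $a_1,\dots,a_{i+1}$ back in as soon as $x_i$ is present; a short induction then gives the two facts on which everything rests: (i) for every $\Sigma$-closed set $B$, $\phi'(B)$ is $B$ together with some auxiliary elements only, so $\phi'(B)\cap X=B$ (here one uses that $b\in B$ already whenever $a_1,\dots,a_k\in B$, as $B$ is $\Sigma$-closed); and (ii) every $\Sigma'$-closed set $S$ satisfies $S=\phi'(S\cap X)$ with $S\cap X$ being $\Sigma$-closed. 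Facts (i) and (ii) say that $B\mapsto\phi'(B)$ and $S\mapsto S\cap X$ are mutually inverse bijections between the $\Sigma$-closed and the $\Sigma'$-closed sets; both maps being monotone, this bijection is an isomorphism of posets, and hence of lattices.

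Now let $(X,\Sigma)$ be an implicational base and $\B^{+}$ an antichain of $\Sigma$-closed sets, an instance of the general dualization problem, whose solution is the antichain $\B^{-}$ dual to $\B^{+}$, that is, the set of $\subseteq$-maximal $\Sigma$-closed sets including no member of $\B^{+}$. The reduction outputs $(X',\Sigma')$ together with $\widehat{\B^{+}}=\{\phi'(B):B\in\B^{+}\}$. Since the closed-set correspondence is a lattice isomorphism, $\widehat{\B^{+}}$ is again an antichain; moreover, for a $\Sigma'$-closed set $S$ and a $\Sigma$-closed set $B$ one has $S\supseteq\phi'(B)$ if and only if $S\cap X\supseteq B$, so $S$ is a $\subseteq$-maximal $\Sigma'$-closed set including no member of $\widehat{\B^{+}}$ exactly when $S\cap X$ is a $\subseteq$-maximal $\Sigma$-closed set including no member of $\B^{+}$. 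Hence the solution of the transformed instance is $\{\phi'(S):S\in\B^{-}\}$, a family of the same cardinality as $\B^{-}$ from which $\B^{-}$ is recovered element-wise by intersecting with $X$. An output-polynomial algorithm for the dualization restricted to implicational bases of dimension two would therefore yield one for the general problem, which is impossible unless $\mathsf{P}=\mathsf{NP}$ by~\cite{babin2017dualization}; this proves the theorem.

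The main obstacle is establishing (i) and (ii), i.e.\ that the auxiliary elements create no ``spurious'' $\Sigma'$-closed set and collapse none of the original ones: dropping the reverse implications would break (ii), since singletons such as $\{x_1\}$ would become closed, so these implications are essential; conversely one checks that they add nothing to the closure of a set already $\Sigma$-closed, and that distinct unrollings do not interfere because their auxiliary elements are disjoint and occur only in their own new implications. (An alternative, self-contained route bypasses~\cite{babin2017dualization} and reduces directly from \textsc{Sat} in the spirit of Kavvadias, Sideri and Stavropoulos~\cite{kavvadias2000generating}: one element per literal, implications routing every contradictory literal pair and every falsified clause---the latter through a dimension-two chain exactly as above---to a common sink element that forms $\B^{+}$, plus auxiliary gadgets forcing the maximal solutions occurring irrespective of satisfiability to constitute a polynomially bounded, explicitly computable family; there the delicate part is the design of those gadgets under the dimension-two constraint.)
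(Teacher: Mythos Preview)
Your argument is correct, but it proceeds along a quite different line from the paper's own proof. The paper does \emph{not} reduce from the unbounded-dimension case; instead it proves directly that the decision problem \textsc{Dual} is co{\sf NP}-complete for dimension two, via a many-one reduction from positive \textsc{One-in-Three 3Sat}. Concretely, for a formula with variables $x_1,\dots,x_n$ and clauses $C_1,\dots,C_m$ the paper builds the base on $X=\{x_1,\dots,x_n,y_1,\dots,y_m,z\}$ with implications $c_{j,p}c_{j,q}\to z$, $zc_{j,p}\to y_j$, and $y_j\to z$, sets $\B^+=\{X\setminus\{y_j,c_{j,1},c_{j,2},c_{j,3}\}:j\in[m]\}$ and $\B^-=\{\{y_1,\dots,y_m,z\}\}$, and shows that $\B^+,\B^-$ fail to be dual exactly when a one-in-three assignment exists. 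Output-polynomial hardness of \textsc{DualEnum} is then an immediate corollary.

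Your route---unrolling every premise of size $\geq 3$ into a chain of binary implications with fresh auxiliaries and reverse implications, thereby obtaining an isomorphic closed-set lattice---is a clean, reusable observation: it shows that the dimension parameter can always be pushed down to two without altering the closure lattice, hence any hardness for arbitrary bases transfers. It is more modular (it treats~\cite{babin2017dualization} as a black box) and arguably more conceptual, but it does rely on that prior result. The paper's proof, by contrast, is self-contained and yields the slightly sharper statement that the \emph{decision} problem is already co{\sf NP}-complete at dimension two (your reduction also preserves the decision problem, so you recover this too, but only by composing with the Babin--Kuznetsov reduction). A small slip worth noting: you describe $\B^-$ as the \emph{maximal} closed sets \emph{containing} no member of $\B^+$, whereas the paper's convention is that $\B^-$ consists of the \emph{minimal} closed sets \emph{contained in} no member of $\B^+$; since your lattice isomorphism transports either notion of dual antichain, this does not affect the validity of the argument. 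The parenthetical alternative you sketch at the end is in fact close in spirit to what the paper actually does.
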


In~the case of premises of size one, the problem remains open.
The lattice in that situation is distributive~\cite{davey2002introduction}.
The best known algorithm is due to Babin and Kuznetsov and runs in output sub-exponential time~\cite{babin2017dualization}.
We show using hypergraph dualization that it can be solved in output quasi-polynomial time whenever the implicational base has bounded {\em independent-width}, defined as the size of a maximum independent set of implications.
Our result holds in fact in the more general context of implicational bases having unbounded dimension.
See Theorem~\ref{thm:bounded-main}.
The approach is similar to the one in~\cite{nourine2014dualization} as we show that the problem can be reduced to hypergraph dualization in that case, which allows us to use the algorithm of Fredman and Khachiyan.
Lattices that share this property include distributive lattices coded by the ideals of an interval order, when both the independent-width and the dimension of the implicational base equal one.

The rest of the paper is organized as follows.
In~Section~\ref{sec:prelim} we introduce necessary concepts and definitions.
Theorems~\ref{thm:main-npc} and~\ref{thm:bounded-main} are respectively proved in Sections~\ref{sec:npc} and~\ref{sec:bounded}.
We~conclude with future research directions in Section~\ref{sec:concl}.

\section{Preliminaries}\label{sec:prelim}

All objects considered in this paper are finite.
For a set $X$ we denote by $2^X$ the set of all subsets of $X$.
For an integer $n\in \mathbb{N}$ we denote by $\intv{1}{n}$ the set $\{1,\dots,n\}$.
We shall note $f\in \poly(n)$ if $f: \mathbb{N}\rightarrow \mathbb{N}$ is a function and $f\in O(n^c)$ for some fixed constant $c\in \mathbb{N}$.

A {\em hypergraph} $\H$ over a ground set $X$ is a subset $\H$ of $2^X$.
Elements of $\H$ are called {\em hyperedges}, and elements of $X$ are called {\em vertices}.
A {\em transversal} of $\H$ is a subset $T\subseteq X$ of vertices that intersects every hyperedge $E\in \H$.
It is called {\em minimal} if it is minimal by inclusion.
The set of all minimal transversals of $\H$ is denoted by $Tr(\H)$.
Note that $Tr(\H)$ also defines a hypergraph.
The problem of deciding whether $\G=Tr(\H)$ given two hypergraphs $\H,\G\subseteq 2^X$ is known as \HypergraphDual{}.
The problem of computing $Tr(\H)$ given $\H\subseteq 2^X$ is denoted by \TransEnum{}.
It is well known that there is a polynomial-time algorithm for \HypergraphDual{} if and only if there is an output-polynomial time algorithm for \TransEnum{}~\cite{bioch1995complexity,eiter2008computational}.
To date, the best known algorithm for these problems is due to Fredman and Khachiyan~\cite{fredman1996complexity} and runs in $N^{o(\log N)}$ time where $N=|\H|+|\G|$.
The existence of a polynomial-time algorithm solving \HypergraphDual{} is now open for more than 35 years~\cite{eiter1995identifying,eiter2003new,eiter2008computational}.
We refer the reader to~\cite{eiter2008computational} for a survey on hypergraph dualization.

A {\em partial order} on a set $X$ (or {\em poset}) is a binary relation $\leq$ on $X$ which is reflexive, anti-symmetric and transitive, denoted by $P=(X,\leq)$.
Two elements $x$ and $y$ of $P$ are said to be {\em comparable} if $x \leq y$ or $y \leq x$, otherwise they are said to be {\em incomparable}.
We~note $x<y$ if $x\leq y$ and $x\neq y$.
If an element $u$ of $P$ is such that both $x\leq u$ and $y\leq u$ then $u$ is called {\em upper bound} of $x$ and $y$; it is called {\em least upper bound} of $x$ and $y$ if moreover $u\leq v$ for every upper bound $v$ of $x$ and~$y$.
Note that two elements of a poset may or may not have a least upper bound.
The least upper bound (also known as {\em supremum} or {\em join}) of $x$ and $y$, if it exists, is denoted by $x\vee y$. 
The greatest lower bound (also known as {\em infimum} or {\em meet}) of $x$ and $y$, if it exists, is denoted by $x\wedge y$ and is defined dually. 
A~subset of a poset in which every two elements are comparable is called a {\em chain}. 
A~subset of a poset in which no two distinct elements are comparable is called an {\em antichain}.
A poset is an {\em interval order} if it corresponds to an ordered collection of intervals on the real line such that $[x_1,x_2] < [x_3,x_4]$ if and only if $x_2<x_3$.
The {\sf 2+2} poset is the union of two disjoint $2$-elements chains.
It is well known that interval orders are {\sf 2+2}-free, that is, they do not induce the {\sf 2+2} poset as a suborder~\cite{fishburn1970intransitive}.
A~set $I\subseteq X$ is called {\em ideal} of $P$ if $x\in I$ and $y\leq x$ imply $y\in I$. 
If $x\in I$ and $x\leq y$ imply $y\in I$, then $I$ is called {\em filter} of $P$.
Note that the complementary of an ideal is a filter, and vice versa.
For~every $x\in P$ we associate the {\em principal ideal of $x$} (or simply {\em ideal of $x$}), denoted by $\downarrow\,x$, and defined by $\downarrow x=\{y\in X \mid y\leq x\}$. 
The {\em principal filter of} $x\in X$ is the dual $\uparrow x=\{y\in X \mid x\leq y\}$.
If $S$ is a subset of $X$, we respectively denote by $\downarrow S$ and $\uparrow S$ the sets defined by $\downarrow S=\bigcup_{x\in S} \downarrow x$ and $\uparrow S=\bigcup_{x\in S} \uparrow x$, and denote by $\Min(S)$ and $\Max(S)$ the sets of minimal and maximal elements of $S$ with respect to~$\leq$ in $P$.
The following notion is central in this paper.

\begin{definition}
	Let $P=(X,\leq)$ be a poset and $B^+$, $B^-$ be two antichains of $P$.
	We~say that $B^+$ and $B^-$ are dual in $P$ if $\downarrow\!B^+\,\cup \uparrow\!B^-=X$ and $\downarrow\!B^+\,\cap \uparrow\!B^-=\emptyset$.
\end{definition}

In other words, $B^+$ and $B^-$ are dual in $P$ if one of ${B^+=\Max\{x \mid x\not\in\,\uparrow\!B^-\}}$ or ${B^-=\Min\{x \mid x\not\in\,\downarrow\!B^+\}}$ holds.
Hence the problem of deciding whether two antichains $B^+$ and $B^-$ of $P$ are dual can be solved in polynomial time in the size of $P$.
The task becomes difficult when the poset is not fully given, but only an implicit coding---of possibly logarithmic size in the size of $P$---is given:
this is usually the case when considering dualization problems in lattices.

\begin{figure}
	\center
	\includegraphics[scale=1.1]{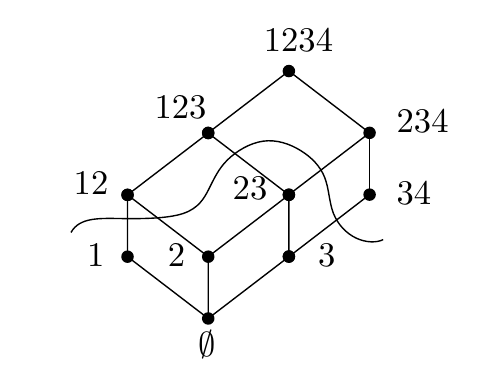}
	\caption{The lattice $\L(\Sigma)$ of closed sets of the implicational base $\Sigma=\{13\rightarrow 2,\ 4\rightarrow 3\}$ on ground set $X=\{1,2,3,4\}$, and the border (curved line) formed by the two dual antichains $\B^+=\{\{1\},\{2,3\}\}$ and $\B^-=\{\{1,2\}, \{3,4\}\}$ of $\L(\Sigma)$. For better readability, closed sets and premises are denoted without braces, i.e., $123$ stands for $\{1,2,3\}$.}\label{fig:lattice-acyc}
\end{figure}

A {\em lattice} is a poset in which every two elements have a least upper bound and a greatest lower bound~\cite{davey2002introduction,gratzer2011lattice}.
It is called {\em distributive} if for any three elements $x,y,z$ of the lattice,
\[
	x \wedge (y \vee z)=(x \wedge y) \vee (x \wedge  z).
\]
An implicational base $(X,\Sigma)$ is a set $\Sigma$ of implications of the form $A \rightarrow B$ where $A\subseteq X$ and $B\subseteq X$; see~\cite{wild2017joy,bertet2018lattices}.
In this paper we only consider implicational bases in their equivalent {\em unit} form where $|B|=1$ for every implication, and denote by $A\rightarrow b$ such implications, where $B=\{b\}$.
The {\em size} of $\Sigma$ is the number of implications in~$\Sigma$.
It is denoted by $|\Sigma|$.
The {\em dimension} of $\Sigma$ is the size of a largest premise in~$\Sigma$.
A set $C\subseteq X$ is {\em closed} in $\Sigma$ if for every implication $A\rightarrow b$ of $\Sigma$, at least one of $b\in C$ or $A\not\subseteq C$ holds.
To~$\Sigma$ we associate the {\em closure operator} $\phi$ which maps every $C\subseteq X$ to the smallest closed set of $\Sigma$ containing $C$, and that we denote by $\phi(C)$.
Then, we note $\C_\Sigma$ the set of all closed sets of $\Sigma$.
It~is well known that every lattice can be represented as the set of all closed sets of an implicational base, ordered by inclusion.
To~$\Sigma$ we associate $\L(\Sigma)=(\C_\Sigma,\subseteq)$ such a lattice.
We note that an antichain of $\L(\Sigma)$ is a set $\B\subseteq \C_\Sigma$ such that $B_1\not\subseteq B_2$ for any two $B_1,B_2\in \B$.
An example of a lattice of closed sets of an implicational base is given in Figure~\ref{fig:lattice-acyc}.
If~$\Sigma$ is empty, then $\L(\Sigma)=(2^X,\subseteq)$ and the lattice is called {\em Boolean}.
If~$\Sigma$ only has premises of size one, then the lattice is distributive and this is in fact a characterization~\cite{davey2002introduction}.
Furthermore in that case, the implicational base can be seen as a poset $P=(X,\leq)$ where $x\leq y$ if and only if $y\rightarrow x$, and $\phi(S)=\downarrow_P S$ for all $S\subseteq X$.
We call {\em underlying poset} of $\Sigma$ this poset.
Note that in general $\L(\Sigma)$ may be of exponential size in the size of $(X,\Sigma)$: it is in particular the case when the implicational base is empty.

In this paper, we are concerned with the following decision problem and one of its two generation versions.

\begin{decproblem}
	\problemtitle{Dualization in Lattices Given by Implicational Bases (\Dual{})}
	\probleminput{An implicational base $(X,\Sigma)$ and two antichains $\B^+,\B^-$ of $\L(\Sigma)$.}
	\problemquestion{Are $\B^+$ and $\B^-$ dual in $\L(\Sigma)$?}
\end{decproblem}

\begin{genproblem}
	\problemtitle{Generation version of \Dual{} (\textsc{DualEnum})}
	\probleminput{An implicational base $(X,\Sigma)$ and an antichain $\B^+$ of $\L(\Sigma)$.}
	\problemquestion{The dual antichain $\B^-$ of $\B^+$ in $\L(\Sigma)$.}
\end{genproblem}

A positive instance of \Dual{} is given in Figure~\ref{fig:lattice-acyc}.
Observe that the lattice $\L(\Sigma)$ is not given in any of the two problems defined above.
Only $(X,\Sigma)$ is given, which is a crucial point.
Recently in~\cite{babin2017dualization} it was shown that \Dual{} is co{\sf NP}-complete, hence that \DualEnum{} cannot be solved in output-polynomial time unless {\sf P$=$NP}.
When the implicational base is empty---when the lattice is Boolean---the problem is equivalent to \HypergraphDual{}.
Then it admits an algorithm running in $N^{o(\log N)}$ time where $N=|\B^+|+|\B^-|$ using the algorithm of Fredman and Khachiyan.
In the case of premises of size one---when the lattice is distributive---the best known algorithm is due to Babin and Kuznetsov~\cite{babin2017dualization} and runs in sub-exponential time $2^{O(n^{0,67} \log^3 N)}$ where $N=|\B^+|+|\B^-|$ and $n=|X|$.
Quasi-polynomial time algorithms are known for subclasses of distributive lattices, including products of chains \cite{elbassioni2009algorithms}.

We conclude the preliminaries with notions of width that we later consider in this paper.
Let $(X,\Sigma)$ be an implicational base and $\phi$ be its associated closure operator.
A~set $T\subseteq X$ is {\em independent} w.r.t.~$\phi$ if $x\not\in \phi(T\setminus \{x\})$ for any $x\in T$.
Given two sets $T,I\subseteq X$ we say that $T$ is a {\em covering set} of $I$ if $I\subseteq \phi(T)$, and that it is a {\em generating set} of $I$ if in addition $T\subseteq I$. 
It is called {\em minimal} if $I\not\subseteq \phi(T\setminus \{x\})$ for any $x\in T$.
Clearly, every minimal covering set of $I$ is independent, and a generating set of $I$ is minimal if and only if it is independent.
We point out that these notions only rely on $\phi$ and not on the implications in $\Sigma$.
To every $I\subseteq X$ we associate the set $\spex(I)\subseteq 2^I$ of minimal generating sets of $I$.
Note that several such subsets exist in general. 
We distinguish a particular one that we denote by $\ex(I)$ and that is obtained from $T=I$ by the following procedure:

\begin{center}
\begin{tabular}{l}
	{\bf while there exists} $x\in T$ {\bf such that} $I\subseteq \phi(T\setminus \{x\})$ {\bf do} $T\leftarrow T\setminus \{x\}$\\
	{\bf return} $T$ {\bf as} $\ex(I)$
\end{tabular}
\end{center}

\noindent In order for such a procedure to be deterministic we chose $x$ of smallest index in $T$ at each step.
A subset of implications in $\Sigma$ is called {\em independent} if every of its implications has a conclusion that cannot be obtained from the premises of other implications.
In other words, a set of $k$ implications $\{{A_1\rightarrow b_1},\,\dots\,,{A_k\rightarrow b_k}\}\subseteq\Sigma$ is {\em independent} if, for any $i\in\intv{1}{k}$, $b_i\not\in \phi(A_1\cup \dots \cup A_{i-1} \cup A_{i+1} \cup \dots \cup A_k)$.
We call {\em independent-width} of $(X,\Sigma)$ the size of a maximum independent set of implications in $\Sigma$.
Note that if $\Sigma$ is of dimension and independent-width one, then it has no implications $a\rightarrow b$ and $c\rightarrow d$ such that $d\not\in\phi(a)$ and $b\not\in\phi(c)$.
In particular, the underlying poset of $\Sigma$ is {\sf 2+2}-free in that case, i.e., it is an interval order.

\section{Implicational bases of dimension two}\label{sec:npc}

We show that it is co{\sf NP}-complete to decide whether two antichains of a lattice given by an implicational base of dimension two are dual.
The reduction is based on the one of Kavvadias et al.~in~\cite{kavvadias2000generating}, except that we manage to hide the Horn clause of unbounded size in one of the two antichains.

\begin{theorem}\label{thm:npc}
	\Dual{} is co{\sf NP}-complete for implicational bases of dimension two.
\end{theorem}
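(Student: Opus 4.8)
The plan is to reduce from the problem of deciding whether a monotone DNF and a monotone CNF are dual — equivalently, from the co\textsf{NP}-completeness result of Kavvadias, Sideri and Stavropoulou~\cite{kavvadias2000generating} on recognizing whether a list of sets is exactly the set of maximal models of a Horn expression — but to do so in a way that keeps the dimension of $\Sigma$ bounded by two. The obstruction in the naive reduction (as used in~\cite{babin2017dualization}) is that the Horn clause encoding the "hard" hyperedge has unbounded premise size. The key idea, as announced in the section preamble, is to \emph{hide that large clause inside one of the two antichains} $\B^+,\B^-$ rather than inside $\Sigma$, using the fact that an antichain of $\L(\Sigma)$ is an arbitrary antichain of closed sets and is therefore expressive enough to carry a single large constraint for free.

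\textbf{Construction.} Given a hypergraph $\H$ over vertices $\{v_1,\dots,v_n\}$ (an instance of \HypergraphDual{}, together with a candidate dual $\G$), I would build a ground set $X$ containing the $v_i$'s plus a small number of gadget elements, and an implicational base $\Sigma$ of dimension two whose closed sets correspond in a controlled way to (partial) transversal/independent structure on $\H$. Implications of the form $\{x,y\}\rightarrow z$ suffice to express, for instance, "if two specified elements are present then a marker is forced", and chains of such binary implications can propagate a marker; crucially, I do \emph{not} try to encode "hitting every hyperedge of $\H$" via implications, since that would need large premises. Instead, the requirement that a set hits all hyperedges (or avoids being a subset of some hyperedge-complement) is imposed by choosing $\B^+$ to be the antichain of closed sets coding the maximal "non-transversals", and $\B^-$ the antichain coding the minimal transversals — so that $\B^+$ and $\B^-$ are dual in $\L(\Sigma)$ precisely when $\G = Tr(\H)$. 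The large hyperedge that makes Kavvadias et al.'s instance hard becomes a single closed set appearing in $\B^+$ (or $\B^-$), which is allowed to be arbitrarily large.

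\textbf{Verification steps.} I would then prove, in order: (i) $\Sigma$ has dimension two and $|X|,|\Sigma|,|\B^+|,|\B^-|$ are polynomial in the size of $(\H,\G)$, so the reduction is polynomial; (ii) both $\B^+$ and $\B^-$ are antichains of $\L(\Sigma)$, i.e., each listed set is closed under $\phi$ and the lists are inclusion-incomparable — this is where the gadget implications must be checked to be consistent with the intended closed sets; (iii) the correctness equivalence: $\B^+$ and $\B^-$ are dual in $\L(\Sigma)$ $\iff$ the original \HypergraphDual{} (or maximal-Horn-models) instance is a yes-instance. Direction ($\Leftarrow$) is a routine unpacking of the definitions of $\downarrow\!\B^+$ and $\uparrow\!\B^-$ through $\phi$; direction ($\Rightarrow$) is the substantive one, where I must argue that any closed set witnessing non-duality (a $C\in \C_\Sigma$ with $C\notin\,\downarrow\!\B^+$ and $C\notin\,\uparrow\!\B^-$) projects to a genuine "new" transversal or a genuine gap in $\H$, using the gadget elements to rule out spurious closed sets introduced by the encoding.

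\textbf{Main obstacle.} The delicate point is step (ii)/(iii) together: making the binary implications rich enough that the closed sets of $\Sigma$ faithfully mirror the combinatorics of $\H$, while being poor enough (dimension two!) that no large constraint sneaks in through $\Sigma$ — and simultaneously ensuring that $\L(\Sigma)$ does not acquire "parasitic" closed sets that would create spurious witnesses of non-duality unrelated to the hypergraph instance. Membership in co\textsf{NP} is immediate since non-duality is certified by a single closed set $C$ (checkable in polynomial time against $(X,\Sigma)$, $\B^+$, $\B^-$), so the whole weight of the theorem is in the hardness direction and in getting the gadget exactly right; the co\textsf{NP}-completeness of \Dual{} for dimension two then yields Theorem~\ref{thm:main-npc} as an immediate corollary via the standard equivalence between output-polynomial enumeration and polynomial-time decision.
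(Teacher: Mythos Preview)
Your proposal has the right meta-strategy --- hide the large-premise Horn clause inside an antichain rather than inside $\Sigma$ --- but the concrete plan has a genuine gap that prevents it from being a proof.

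The problem is the source of the reduction. You write that you will reduce from ``deciding whether a monotone DNF and a monotone CNF are dual,'' and you set up the correctness equivalence as ``$\B^+$ and $\B^-$ are dual in $\L(\Sigma)$ iff $\G = Tr(\H)$.'' But monotone DNF/CNF duality \emph{is} \HypergraphDual{}, and \HypergraphDual{} is not known to be co\textsf{NP}-hard --- the Fredman--Khachiyan quasi-polynomial bound is precisely evidence that it may not be. A reduction from \HypergraphDual{} to \Dual{} therefore establishes no hardness at all; and that reduction is in any case trivial (take $\Sigma=\emptyset$), so your dimension-two gadget would do no work. You conflate this with the Kavvadias et al.\ result, but the two are not equivalent: their co\textsf{NP}-completeness is for maximal models of \emph{Horn} formulas, and the hardness comes from a SAT variant, not from transversal computation. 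You need a genuinely \textsf{NP}-hard source problem, and your outline never names one.

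The paper reduces directly from positive \textsc{One-in-Three 3Sat}. The ground set is $X=\{x_1,\dots,x_n,y_1,\dots,y_m,z\}$; for each clause $C_j=(c_{j,1}\vee c_{j,2}\vee c_{j,3})$ the binary implications say that any two of its literals force the marker $z$, that $z$ together with any literal of $C_j$ forces the clause-marker $y_j$, and that $y_j$ forces $z$. The ``unbounded'' constraint --- that \emph{all} clauses be hit --- is placed in $\B^-=\{\{y_1,\dots,y_m,z\}\}$, while $\B^+=\{X\setminus\{y_j,c_{j,1},c_{j,2},c_{j,3}\}:j\in[m]\}$. One then checks that a minimal closed set witnessing non-duality is exactly a one-in-three satisfying assignment. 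This is the concrete gadget your outline leaves unspecified, and it crucially exploits both that the source problem is \textsf{NP}-complete and that its clauses have width three, so that ``at most one literal per clause'' is expressible with $\binom{3}{2}$ binary implications per clause.
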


\begin{proof}
	Membership in co{\sf NP} follows from the fact that checking whether $\downarrow \B^+ \cap {\uparrow \B^-}\neq\emptyset$, or whether a given set $F\subseteq X$, closed in $\Sigma$, is such that both $F\not\in\,\downarrow\!\B^+$ and $F\not\in\,\uparrow\!\B^-$ can be done in polynomial time in the sizes of $\Sig$, $\B^+$ and $\B^-\!$; such a set $F$ constitutes a certificate for a `no' answer.

	We show completeness by reducing \textsc{One-in-Three 3Sat}, restricted to positive literals, to the complement of \Dual{}.
	This restricted case of \textsc{One-in-Three 3Sat} remains {\sf NP}-complete~\cite{kavvadias2000generating,garey2002computers}.
	In this problem, one is given a $n$-variable, $m$-clause positive Boolean formula
	\[
		\phi(x_1,\dots,x_n)=\bigwedge_{j=1}^m C_j = \bigwedge_{j=1}^m (c_{j,1} \vee c_{j,2} \vee c_{j,3})
	\]
	where $x_1,\dots,x_n$ and $C_1,\dots,C_m$ respectively denote the variables and the clauses of $\phi$, and where every variable appears in at least one clause ($c_{j,i}$ denotes the variable that appears in clause $j$ at position $i$).
	Then the task is of deciding whether there exists an assignment of the variables such that every clause contains exactly one variable to one.
	We call {\em one-in-three truth assignment} such an assignment.
	We construct an instance of \Dual{} as follows.
	Let $X=\{x_1,\dots,x_n,y_1,\dots,y_m,z\}$ be the ground set made of one element $x$ per variable of~$\phi$, one element $y$ per clause of~$\phi$, and an additional special element $z$.
	Let~$\Sigma$~be the implicational base defined by 
	\otherlabel{imp:1}{(1)}%
	\otherlabel{imp:2}{(2)}%
	\otherlabel{imp:3}{(3)}%
	\otherlabel{imp:4}{(4)}%
	\otherlabel{imp:5}{(5)}%
	\otherlabel{imp:6}{(6)}%
	\otherlabel{imp:7}{(7)}%
	\begin{equation*}
		\!\!\!\!\!\!\!\!\!\Sigma = 
		\left \{
		\begin{array}{rlr}
			~c_{j,1}c_{j,2}&\rightarrow~ z  &~ \text{(1)}~\\
			~c_{j,1}c_{j,3}&\rightarrow~ z  &~ \text{(2)}~\\
			~c_{j,2}c_{j,3}&\rightarrow~ z  &~ \text{(3)}~\\
			~zc_{j,1}&\rightarrow~ y_j   &~ \text{(4)}~\\
			~zc_{j,2}&\rightarrow~ y_j   &~ \text{(5)}~\\
			~zc_{j,3}&\rightarrow~ y_j   &~ \text{(6)}~\\
			~y_j&\rightarrow~ z &~ \text{(7)}~
		\end{array}
		\middle\vert~~ j\in\intv{1}{m}~\right\}.
	\end{equation*}
	Then we put
	\begin{gather*}
		\B^+=\{B_j=X\setminus \{y_j,c_{j,1},c_{j,2},c_{j,3}\}~ \mid~ j\in\intv{1}{m}\},\\ 
		\B^-=\{F=\{y_1,\dots,y_m,z\}\}.
	\end{gather*}
	Clearly, $(X,\Sigma)$, $\B^+$ and $\B^-$ are constructed in polynomial time in the size of $\phi$.
	Moreover, every $B_j\in \B^+$ is closed in $\Sigma$ (observe that no literal in $\{c_{j,1},c_{j,2},c_{j,3}\}$ is the conclusion of an implication of $\Sigma$, and that $y_j$ cannot be implied without any literal in $\{c_{j,1},c_{j,2},c_{j,3}\}$).
	As $B_j$ is the only set of $\B^+$ containing $y_j$ for every $j\in\intv{1}{m}$, no two sets in $\B^+$ are inclusion-wise comparable.
	Hence $\B^+$ is an antichain of $\L(\Sigma)$.
	Also, $\B^-$ is an antichain of $\L(\Sigma)$ as it is a singleton and its unique element $F$ is closed in~$\Sigma$.
	At last, $\Sigma$ is of dimension two.
	Are $\B^+$ and $\B^-$ dual in $\L(\Sigma)$?
	We show that the answer is `no' if and only if there is a one-in-three truth assignment of~$\phi$.

	We prove the first implication.
	Let us assume that $\B^+$ and $\B^-$ are not dual in $\L(\Sigma)$.
	Since $\downarrow \B^+\cap \uparrow \B^-=\emptyset$, there must be some closed set $F'\subseteq X$ such that both $F'\not\in\,\downarrow\!\B^+$ and $F'\not\in\,\uparrow\!\B^-\!$.
	We consider an inclusion-wise minimal such set $F'$.
	Since $F\setminus\{z\}$ is not closed in $\Sigma$, and $F\setminus \{y_j\}\subseteq B_j$ for every $j\in \intv{1}{m}$, we conclude that $F'\not\subseteq F$.
	Then $F'\cap \{x_1,\dots,x_n\}\neq\emptyset$.
	Let $x\in F'\cap \{x_1,\dots,x_n\}$.
	We show that $z\not\in F'$ by contradiction.
	Suppose that $z\in F'$.
	Then by Implications~\ref{imp:4} to~\ref{imp:6}, $y_j\in F'$ for all $j\in \intv{1}{m}$ such that $x\in C_j$.
	Hence for every clause $C_j$ containing~$x$, we have that $|F'\cap \{y_j,c_{j,1},c_{j,2},c_{j,3}\}|\geq 2$.
	Hence $F'\setminus \{x\}\not\subseteq B_j$ for any $j\in \intv{1}{m}$. 
	Since $F'\setminus \{x\}$ is closed, this contradicts the fact that $F'$ is chosen minimal such that $F'\not\in\,\downarrow\!\B^+\!$.
	Hence $F'$ does not contain $z$.
	Clearly $F'\cap\{y_1,\dots,y_m\}=\emptyset$ as otherwise by Implication~\ref{imp:7}, $F'$ would contain~$z$.
	As $F'\not\subseteq B_j$ for any $j\in \intv{1}{m}$, $|F'\cap C_j|\geq 1$ for every such $j$.
	Furthermore $|F'\cap C_j|\leq 1$ for every $j\in \intv{1}{m}$ as otherwise by Implications~\ref{imp:1} to~\ref{imp:3} $F'$ would contain~$z$.
	Consequently $F'$ is a one-in-three truth assignment of $\phi$, concluding the first implication.

	We prove the other implication.
	Let $T$ be a one-in-three truth assignment of $\phi$. 
	As $T\subseteq \{x_1,\dots,x_n\}$ and $|T\cap C_j|=1$ for all $j\in \intv{1}{m}$, $T$ is closed in $\Sigma$. 
	Furthermore it is not a subset of any $B_j\in \B^+\!$. 
	Since at last $T\not\supseteq F$, we obtain that both $T\not\in\,\downarrow\!\B^+$ and $T\not\in\,\uparrow\!\B^-\!$. 
	Consequently $\B^+$ and $\B^-$ are not dual in $\L(\Sigma)$, concluding the proof.
\end{proof}

As a consequence, there is no algorithm solving \DualEnum{} in output-polynomial time unless {\sf P$=$NP}, even in the case of implicational bases of dimension two.
This proves Theorem~\ref{thm:main-npc}.

\section{Implicational bases of bounded independent-width}\label{sec:bounded}

We show using hypergaph dualization that the dualization in lattices given by implicational bases can be achieved in output quasi-polynomial time whenever the implicational base has bounded independent-width.

In what follows, let $(X,\Sigma,\B^+)$ be an instance of \DualEnum{}.
Let $\B^-$ be the dual antichain of $\B^+$ that we wish to compute, and $\H$ be the \emph{complementary hypergraph} of $\B^+$ on ground set $X$ defined by
\[
	\H=\{X\setminus B \mid B\in \B^+\}.
\]
It is well known that $Tr(\H)=\B^-$ whenever $\L(\Sigma)$ is Boolean, that is when $\Sigma$ is empty.
We will show how $\B^-$ can be computed from $Tr(\H)$ in the general case.

\needspace{0.5in}
\begin{lemma}\label{lem:T-closure-1}
	To every transversal $T$ of $\H$ corresponds some $I\in \B^-$ such that $I\subseteq \phi(T)$.
	This is in particular the case for every minimal transversal of $\H$.
\end{lemma}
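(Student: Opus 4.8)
The goal is to show that for any transversal $T$ of $\H$ there is some $I \in \B^-$ with $I \subseteq \phi(T)$. The natural approach is to work with the closed set $\phi(T)$ and ask where it sits relative to the border formed by $\B^+$ and $\B^-$. First I would observe that since $T$ is a transversal of $\H = \{X \setminus B \mid B \in \B^+\}$, the set $T$ meets every $X \setminus B$, which is exactly the statement that $T \not\subseteq B$ for every $B \in \B^+$. Since $\phi$ is a closure operator and each $B \in \B^+$ is closed, $T \not\subseteq B$ implies $\phi(T) \not\subseteq B$ for every $B \in \B^+$; hence $\phi(T) \notin\,\downarrow\!\B^+$, i.e. $\phi(T)$ lies on the ``upper'' side of the border.

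Next I would invoke duality of $\B^+$ and $\B^-$ in $\L(\Sigma)$: since $\downarrow\!\B^+ \cup \uparrow\!\B^- = \C_\Sigma$ and $\phi(T) \in \C_\Sigma \setminus \downarrow\!\B^+$, we must have $\phi(T) \in\,\uparrow\!\B^-$. By definition of $\uparrow\!\B^-$ as $\bigcup_{I \in \B^-} \uparrow I$, this means there exists $I \in \B^-$ with $I \subseteq \phi(T)$, which is exactly what we want. The final sentence of the lemma is then immediate, since every minimal transversal of $\H$ is in particular a transversal of $\H$, so the general statement applies verbatim.

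The main thing to be careful about is the direction of the closure argument: one needs $T \not\subseteq B \Rightarrow \phi(T) \not\subseteq B$, which uses both monotonicity of $\phi$ ($T \subseteq \phi(T)$, so $T \subseteq \phi(T) \cap B \subseteq B$ would force $T \subseteq B$) rather than any deeper property. I do not anticipate a real obstacle here; the lemma is essentially an unpacking of the definition of the complementary hypergraph together with the duality condition. The only modelling subtlety worth stating explicitly is that $\phi(T)$ is a genuine element of the lattice $\L(\Sigma)$ even when $T$ itself is not closed, which is why we pass to $\phi(T)$ before applying duality.
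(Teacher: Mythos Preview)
Your proposal is correct and follows essentially the same route as the paper: show that a transversal $T$ of $\H$ satisfies $T\not\subseteq B$ for every $B\in\B^+$, pass to the closed set $\phi(T)$ to conclude $\phi(T)\notin\,\downarrow\!\B^+$, and then use duality of $\B^+$ and $\B^-$ in $\L(\Sigma)$ to obtain some $I\in\B^-$ with $I\subseteq\phi(T)$. The only cosmetic difference is that the paper deduces $\phi(T)\not\subseteq B$ directly from $T\subseteq\phi(T)$ rather than invoking closedness of $B$, and phrases the last step by contradiction; your direct use of $\downarrow\!\B^+\cup\uparrow\!\B^-=\C_\Sigma$ is equivalent.
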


\begin{proof}
	Let $T$ be a transversal of $\H$. 
	As $T\cap E\neq\emptyset$ for all $E\in \H$, $T$ satisfies $T\not\subseteq B$ for any $B\in \B^+\!$.
	As $T\subseteq \phi(T)$ this is also the case of $\phi(T)$.
	Hence $\phi(T)\not\in \downarrow \B^+\!$.
	Now if there is no $I\in \B^-$ such that $I\subseteq \phi(T)$ then $\phi(T)\not\in \uparrow \B^-\!$, contradicting the duality of $\B^+$ and $\B^-$ in $\L(\Sigma)$.
	We conclude that one such $I$ must exist.
	The last remark follows by inclusion.
\end{proof}

\needspace{0.5in}
\begin{lemma}\label{lem:T-ex}
	If $T$ is a transversal of $\H$, then every set $T^*$ in $\spex(T)$ is.
	In particular, every minimal transversal of $\H$ is independent w.r.t.~$\phi$.
\end{lemma}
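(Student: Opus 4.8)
The plan is to route everything through the characterization of transversals that was already used in the proof of Lemma~\ref{lem:T-closure-1}: a set $S\subseteq X$ is a transversal of $\H$ if and only if $S\not\subseteq B$ for every $B\in\B^+$. So fix a transversal $T$ of $\H$ and a set $T^*\in\spex(T)$. By the definition of a minimal generating set we have $T^*\subseteq T\subseteq\phi(T^*)$; applying $\phi$ gives $\phi(T^*)=\phi(T)$, which is the only structural fact about $T^*$ that will be needed.

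The core step is to show $T^*$ is a transversal of $\H$, which I would do by contradiction. Suppose $T^*\subseteq B$ for some $B\in\B^+$. Since $B$ is closed in $\Sigma$ we have $\phi(B)=B$, hence $\phi(T^*)\subseteq\phi(B)=B$; combined with $T\subseteq\phi(T^*)$ this yields $T\subseteq B$, contradicting that $T$ is a transversal. The whole argument hinges on the elements of $\B^+$ being \emph{closed} sets: this is exactly what prevents the smaller set $T^*$ from slipping inside some $B\in\B^+$ while $T$ does not.

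For the ``in particular'' part, let $T$ be a minimal transversal of $\H$ and pick any $T^*\in\spex(T)$ — such a set exists, for instance $\ex(T)$. By the previous step $T^*$ is a transversal of $\H$, and $T^*\subseteq T$, so minimality of $T$ forces $T^*=T$. Hence $T$ is itself a minimal generating set of $T$, and by the equivalence recalled in the preliminaries between minimal generating sets and independent sets, $T$ is independent w.r.t.~$\phi$.

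I do not anticipate a real obstacle here; the only subtlety is to invoke the correct direction of the transversal/covering-set correspondence: being a transversal is not an inclusion-monotone property ``downwards'', so passing from $T$ to a subset $T^*$ is not automatically safe and genuinely requires the closedness argument of the second paragraph.
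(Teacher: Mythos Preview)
Your proof is correct and matches the paper's argument essentially step for step: the contradiction via $T^*\subseteq B\Rightarrow\phi(T^*)\subseteq\phi(B)=B\supseteq T$ using closedness of $B\in\B^+$, and then the ``in particular'' via $T^*\subseteq T$ and minimality of $T$. The only superfluous line is $\phi(T^*)=\phi(T)$, which you compute but never use.
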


\begin{proof}
	We proceed by contradiction.
	Let $T$ be a transversal of $\H$ and $T^*\in\spex(T)$.
	Suppose that $T^*$ is not a transversal.
	Then $T^*\subseteq B$ for some $B\in \B^+$.
	As $B$ is closed, $\phi(T^*)\subseteq \phi(B)=B$.
	Since $T^*$ is a generating set of $T$, $T\subseteq \phi(T^*)$.
	Hence $T\subseteq B$ and thus $T$ is not a transversal of $\H$, a contradiction.
	Consequently every $T^*\in\spex(T)$ is a transversal of $\H$.
	In particular, every minimal transversal $T$ of $\H$ is independent w.r.t.~$\phi$, as otherwise it can be reduced into an arbitrary independent generating set of $T$ which is smaller, contradicting the minimality of $T$.
\end{proof}

\begin{lemma}\label{lem:T-closure-2}
	To every $I\in \B^-$ corresponds $T\in Tr(\H)$ such that $T=\ex(I)$.
\end{lemma}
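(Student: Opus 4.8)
The plan is to prove that, for a fixed $I\in\B^-$, the canonical minimal generating set $T\eqdef\ex(I)$ is itself a minimal transversal of $\H$. Recall from the construction that $T\subseteq I$, that $I\subseteq\phi(T)$, and that $T$ is minimal as a generating set, i.e.\ $I\not\subseteq\phi(T\setminus\{x\})$ for every $x\in T$. Recall also that $I$, being an element of the antichain $\B^-$ of $\L(\Sigma)$, is closed.

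First I would check that $T$ is a transversal of $\H$. Since $I\in\B^-$ we have $I\in\,\uparrow\!\B^-$, so the duality $\downarrow\!\B^+\,\cap\,\uparrow\!\B^-=\emptyset$ gives $I\not\in\,\downarrow\!\B^+$, that is, $I\not\subseteq B$ for every $B\in\B^+$. If $T$ avoided some hyperedge $X\setminus B\in\H$, then $T\subseteq B$, and since $B$ is closed we would get $I\subseteq\phi(T)\subseteq\phi(B)=B$, a contradiction. Hence $T$ meets every hyperedge of $\H$; this is essentially the contrapositive of the argument already used in Lemma~\ref{lem:T-ex}.

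Next I would show $T$ is inclusion-minimal among transversals. Suppose for contradiction that $T\setminus\{x\}$ is still a transversal of $\H$ for some $x\in T$. By Lemma~\ref{lem:T-closure-1} there is some $I'\in\B^-$ with $I'\subseteq\phi(T\setminus\{x\})$. Now $T\setminus\{x\}\subseteq T\subseteq I$ and $I$ is closed, so $\phi(T\setminus\{x\})\subseteq\phi(I)=I$, whence $I'\subseteq I$. As $\B^-$ is an antichain, $I'=I$, and therefore $I\subseteq\phi(T\setminus\{x\})$, contradicting the minimality of the generating set $T=\ex(I)$. So no element can be removed, and $T\in Tr(\H)$, as required.

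The whole argument is short. The step I expect to be the crux is the minimality part: one must route the hypothetical smaller transversal through Lemma~\ref{lem:T-closure-1} to land back in $\B^-$, and then combine the antichain property of $\B^-$ with the inclusion $T\setminus\{x\}\subseteq I$ and the closedness of $I$ to force $I'=I$; only after that does the contradiction with the defining procedure of $\ex(I)$ appear. Everything else is a direct unfolding of the definitions of transversal, closedness, and duality.
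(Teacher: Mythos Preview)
Your proof is correct. The transversality of $T=\ex(I)$ is established exactly as in the paper (the paper simply cites Lemma~\ref{lem:T-ex} after first observing that $I$ itself is a transversal, while you unfold the same reasoning directly). For minimality the two arguments diverge slightly: the paper takes $I'=\phi(T\setminus\{x\})\subsetneq I$ and uses the minimality of $I$ in $\B^-$ to push $I'$ down into $\downarrow\!\B^+$, yielding a missed hyperedge; you instead assume $T\setminus\{x\}$ is a transversal, invoke Lemma~\ref{lem:T-closure-1} to produce some $I'\in\B^-$ inside $\phi(T\setminus\{x\})\subseteq I$, and use the antichain property to force $I'=I$. Both routes exploit the duality of $\B^+$ and $\B^-$, just from opposite sides; your version has the small advantage of reusing Lemma~\ref{lem:T-closure-1} explicitly, while the paper's is marginally more self-contained.
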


\begin{proof}
	Let $I\in \B^-\!$.
	Since $I\not\subseteq B$ for any $B\in \B^+\!$, $I$ is a transversal of $\H$.
	Let $T=\ex(I)$.
	By Lemma~\ref{lem:T-ex} as $\ex(I)\in\spex(I)$, $T$ is a transversal of $\H$ and since $I$ is closed, $I=\phi(T)$.
	We show that $T$ is minimal.
	Let $x\in T$ and $I'=\phi(T\setminus \{x\})$.
	As $T$ is a minimal generating set of $I$, $I'\subset I$.
	By minimality of $I$ it must be that $I'\subseteq B$ for some $B\in \B^+\!$.
	Consequently $I'$ does not intersect the hyperedge $E=X\setminus B$ for such a $B$.
	As $T\setminus \{x\}\subseteq I'$, $T\setminus \{x\}$ is not a transversal of $\H$.
	We conclude that $T\in Tr(\H)$.
\end{proof}

\begin{algorithm}
	\SetAlgoLined

	$\H\leftarrow\{X\setminus B \mid B\in \B^+\}$\;

	\For{{\em\bf every} $T\in Tr(\H)$\label{line:forall}}
	{   
		$I\leftarrow \phi(T)$\;

		\If{$I\in \B^-$ {\bf and} $T=\ex(I)$}
		{

			{\bf output} $I$\; 
		}
	}

	\caption{An algorithm enumerating the dual antichain $\B^-$ of $\B^+$ in $\L(\Sigma)$ given an implicational base $(X,\Sigma)$ of closure operator $\phi$ and an antichain $\B^+$ of $\L(\Sigma)$.}\label{algo:main}
\end{algorithm}

A consequence of Lemma~\ref{lem:T-closure-2} is that one can enumerate $\B^-$ from $Tr(\H)$ by checking for every $T\in Tr(\H)$ whether its closure $I=\phi(T)$ belongs to $\B^-\!$, whether $T=\ex(I)$, and discarding the solution if not. 
Computing $I=\phi(T)$ can be done in $O(|X|\cdot |\Sigma|)$ time.
Testing whether $I$ belongs to $\B^-$ can be done in $O(|X|^2 \cdot (|\Sigma|+|\B^+|))$ time by checking for every $x\in I$ whether $I\setminus \{x\}$ is not closed, or whether $I\setminus \{x\}\subseteq B$ for some $B\in \B^+$ otherwise.
This holds since $I$ is a transversal and if $I\setminus \{x\}\subseteq B$, then $I\setminus \{x\}$ is not a transversal, which establishes that $I\in B^-$.
As for the computation of $\ex(I)$ it can be done in $O(|X|^2\cdot |\Sigma|)$ time following the definition in Section~\ref{sec:prelim}.
Henceforth, enumerating $\B^-$ can be done in total time 
\[
	M^{o(\log M)}+|Tr(\H)|\cdot O(|X|^2 \cdot (|\Sigma|+|\B^+|))
\]
where $M=|\H|+|Tr(\H)|$, by constructing $\H$ in $O(|X| \cdot |\B^+|)$ time, using the algorithm in~\cite{fredman1996complexity} for the enumeration of $Tr(\H)$ in time $M^{o(\log M)}$, and discarding at most $|Tr(\H)|$ solutions with a cost of $O(|X|^2 \cdot (|\Sigma|+|\B^+|))$ per solution.
Repetitions are avoided by discarding $T$ whenever $T\neq\ex(I)$.
This procedure is given in Algorithm~\ref{algo:main}.
Its correctness follows from Lemmas~\ref{lem:T-closure-1} and~\ref{lem:T-closure-2}.
The limitation of such a procedure is that the size of $Tr(\H)$ may be exponentially larger than that of $X$, $\Sigma$, $\B^+$ and $\B^-\!$, hence that the described algorithm may run in output-exponential time.
An example of one such instance is given in Figure~\ref{fig:expinstance-dimension}.
However, we will show that it is not the case whenever the implicational base has bounded independent-width.

Our argument relies on the following observation.

\begin{figure}[t]
	\center
	\includegraphics[scale=1.1]{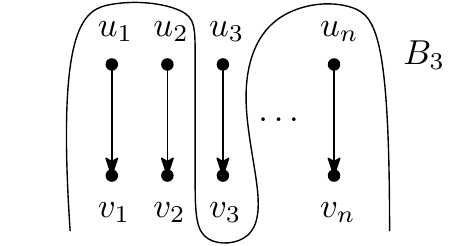}
	\caption{An implicational base $(X,\Sigma)$ on ground set $X=\{u_1,v_1,\dots,u_n,v_n\}$ where $\Sigma=\{u_i\rightarrow v_i \mid i\in \intv{1}{n}\}$.
	By taking $\B^+=\{X\setminus \{u_i,v_i\} \mid i\in \intv{1}{n}\}$, we get $\H=\{\{u_i,v_i\}\mid i\in \intv{1}{n}\}$, $\B^-=\{\{v_1,\dots,v_n\}\}$ and $Tr(\H)=\{\{z_1,\dots,z_n\}\mid (z_1,\dots,z_n)\in \{u_1,v_1\}\times\dots\times\{u_n,v_n\}\}$.}\label{fig:expinstance-dimension}
\end{figure}

\begin{lemma}\label{lem:transversal-indmingen}
	Let $I\in \B^-$ and $T$ be a minimal transversal of $\H$ such that $I\subseteq \phi(T)$.
	Then $T$ is a minimal covering set of $I$.
\end{lemma}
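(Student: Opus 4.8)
The plan is to establish minimality by contradiction, reducing a failure of minimality of $T$ as a \emph{covering set} of $I$ to a failure of minimality of $T$ as a \emph{transversal} of $\H$. Since $I \subseteq \phi(T)$ is part of the hypothesis, $T$ is already a covering set of $I$; so if it were not a minimal one, there would be some $x \in T$ with $I \subseteq \phi(T \setminus \{x\})$. I would fix such an $x$ and aim to show that $T \setminus \{x\}$ is still a transversal of $\H$, which directly contradicts the assumption that $T$ is a minimal transversal.

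The key step is therefore the claim that $T \setminus \{x\}$ meets every hyperedge of $\H$. I would argue this by a second, inner contradiction: if $T \setminus \{x\}$ missed some hyperedge $E = X \setminus B$ with $B \in \B^+$, then $T \setminus \{x\} \subseteq B$; since $B$ is closed in $\Sigma$ and $\phi$ is a closure operator, applying $\phi$ preserves this inclusion, so $I \subseteq \phi(T \setminus \{x\}) \subseteq \phi(B) = B$. This puts $I$ in $\downarrow\!\B^+$, while $I \in \B^-$ puts $I$ in $\uparrow\!\B^-$, contradicting $\downarrow\!\B^+ \cap \uparrow\!\B^- = \emptyset$ from the duality of $\B^+$ and $\B^-$ in $\L(\Sigma)$. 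Hence $T \setminus \{x\}$ is a transversal of $\H$, contradicting the minimality of $T$ in $Tr(\H)$. This shows $I \not\subseteq \phi(T \setminus \{x\})$ for every $x \in T$, i.e.\ $T$ is a minimal covering set of $I$.

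I do not anticipate a genuine obstacle: the proof is a short two-level contradiction built on the same observation already used in Lemmas~\ref{lem:T-ex} and~\ref{lem:T-closure-2}, namely that closedness of the members of $\B^+$ lets one freely replace a set by its $\phi$-closure when testing the transversal property, combined with the elementary fact that no member of $\B^-$ is contained in a member of $\B^+$. The only points that must be invoked carefully are the monotonicity of $\phi$ and the identity $\phi(B) = B$ for closed $B$, both immediate from the definitions in Section~\ref{sec:prelim}.
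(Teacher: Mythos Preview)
Your proof is correct and follows essentially the same idea as the paper's: both hinge on the observation that if $T\setminus\{x\}\subseteq B$ for some closed $B\in\B^+$, then $\phi(T\setminus\{x\})\subseteq B$, together with the fact that $I\in\B^-$ is itself a transversal of~$\H$. The only organisational difference is that the paper argues directly (for each $x\in T$, $T\setminus\{x\}$ is not a transversal, hence $\phi(T\setminus\{x\})$ is not, hence $I\not\subseteq\phi(T\setminus\{x\})$) and routes the middle step through Lemma~\ref{lem:T-ex} and the independence of $T$, whereas your double contradiction bypasses that lemma entirely by appealing only to monotonicity of $\phi$ and closedness of $B$; your version is therefore slightly more self-contained.
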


\begin{proof}
	First recall that by Lemma~\ref{lem:T-ex}, $T$ is independent.
	It may intersect $I$.
	Let $x\in T$.
	By minimality of $T$, $T\setminus \{x\}$ is not a transversal.
	By Lemma~\ref{lem:T-ex}, neither is $\phi(T\setminus \{x\})$ as otherwise since $T$ is independent then $T\setminus \{x\} \in \spex(\phi(T\setminus \{x\}))$ is a transversal, which contradicts the hypothesis that $T$ is minimal.
	Since $I$ is a transversal of $\H$ we have that $I\not\subseteq \phi(T\setminus \{x\})$ for any $x\in T$ and the lemma follows.
\end{proof}

In the following given two subsets $T,I\subseteq X$ such that $T$ is an independent covering set of $I$, we note $\dex(\Sigma,T,I)$ an arbitrary minimal subset of implications of $\Sigma$ having their premise included in $T$ as a subset and that are needed in $\Sigma$ in order to derive $I$ from $T$.
In other words, $\dex(\Sigma,T,I)$ is obtained from the implications of $\Sigma$ having their premise in $T$ by greedily removing an implication of $\Sigma$ having its premise in $T$ while the inclusion $I\subseteq \phi(T)$ holds.
Observe that in consequence no implication in $\dex(\Sigma,T,I)$ has a conclusion that is obtained by closure of the other premises in $\dex(\Sigma,T,I)$, i.e., $\dex(\Sigma,T,I)$ is an independent set of implications of $\Sigma$.

We now express a bound on the number of minimal covering sets a set admits depending on the number of implications in $\Sigma$, and its independent-width.
By Lemma~\ref{lem:transversal-indmingen}, this yields a bound on the number of minimal transversals of $\H$ depending on the sizes of $\B^-\!$, $\Sigma$, and the independent-width of $\Sigma$.

\begin{theorem}\label{thm:bound-indmingen}
	Let $I$ be a subset of $X$.
	Then the number of minimal covering sets of $I$ is bounded by $|\Sigma|^k$ where $k$ is the independent-width of $\Sigma$.
\end{theorem}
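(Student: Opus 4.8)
The plan is to show that each minimal covering set $T$ of $I$ can be injectively associated with an independent set of at most $k$ implications of $\Sigma$, so that the count is bounded by the number of such sets, which is at most $\sum_{i=0}^k \binom{|\Sigma|}{i} \le |\Sigma|^k$ (after absorbing the lower-order terms; one should be slightly careful here, but a clean bound of the stated form should follow, perhaps assuming $k \ge 1$, the only interesting case). The association is the natural one: to a minimal covering set $T$ of $I$, assign the independent set of implications $\dex(\Sigma, T, I)$ introduced just before the statement. So the real content is to prove that this assignment is injective, i.e.\ that $T$ can be recovered from $\dex(\Sigma, T, I)$, and that $|\dex(\Sigma, T, I)| \le k$.

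The bound $|\dex(\Sigma, T, I)| \le k$ is immediate: $\dex(\Sigma, T, I)$ is an independent set of implications of $\Sigma$ by the remark preceding the theorem, and $k$ is by definition the size of a maximum independent set of implications in $\Sigma$. For injectivity, I would argue as follows. Let $T$ be a minimal covering set of $I$ and write $\J = \dex(\Sigma, T, I)$; let $P(\J) = \bigcup_{A\rightarrow b \in \J} A$ be the union of the premises appearing in $\J$. The first claim is $P(\J) \subseteq T$, which holds by construction since every implication in $\dex(\Sigma, T, I)$ has its premise contained in $T$. The second, key claim is the reverse-style containment needed to pin down $T$: I want to show $T \subseteq \phi(P(\J))$, and in fact $T = \ex\big(\text{something determined by }\J\big)$ or, more simply, that $T$ is exactly the set obtained by a canonical minimalization applied to a set determined by $\J$ and $I$. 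Here is the mechanism: by definition of $\dex$, we have $I \subseteq \phi(T)$ using only the implications in $\J$, and since no implication in $\Sigma \setminus \J$ with premise in $T$ is needed, the closure of $T$ "restricted to using $\J$" already reaches all of $I$. Now minimality of $T$ as a covering set of $I$ means $I \not\subseteq \phi(T \setminus \{x\})$ for every $x \in T$. I would combine these to show each $x \in T$ is "justified": either $x$ lies in some premise $A$ of an implication of $\J$ (so $x \in P(\J)$), or $x$ is needed to fire some chain of implications of $\J$ reaching $I$. Making the second alternative precise is where the argument has to be careful — one wants to conclude $T \subseteq P(\J) \cup (\text{elements forced by }\J)$, and since everything forced by $\J$ from $P(\J)$ lies in $\phi(P(\J))$, and $T$ is independent (minimal covering sets are independent, as noted in the preliminaries), $T$ must in fact equal $\ex(\phi(P(\J)))$ intersected appropriately, a set determined solely by $\J$. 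That determinism is exactly injectivity.

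The main obstacle I anticipate is precisely this recovery step: verifying that a minimal covering set $T$ is completely determined by $\dex(\Sigma,T,I)$ together with $I$. The subtlety is that $T$ need not be contained in $I$ (only covering sets that are generating sets satisfy $T \subseteq I$), so $T$ can contain "auxiliary" elements outside $I$ used to trigger implications; one must check these auxiliary elements are themselves forced, i.e.\ that minimality of $T$ prevents $T$ from carrying any element not dictated by the fixed set $\J$ of implications. Concretely the danger is two distinct minimal covering sets $T_1 \ne T_2$ with $\dex(\Sigma,T_1,I) = \dex(\Sigma,T_2,I) = \J$; I would rule this out by showing both equal the canonical set extracted from $\phi(P(\J))$ by the deterministic $\ex$-style procedure — using that $P(\J) \subseteq T_i$, that $I \subseteq \phi(P(\J)) \subseteq \phi(T_i)$, and that minimality forces $T_i$ to be an inclusion-minimal (hence, via the fixed tie-breaking rule, unique) independent subset of $\phi(P(\J))$ still covering $I$. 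Once injectivity into independent sets of size $\le k$ is established, counting gives the bound $|\Sigma|^k$ and the theorem follows; via Lemma~\ref{lem:transversal-indmingen} this also bounds $|Tr(\H)|$ by $|\B^-| \cdot |\Sigma|^k$, which is the point of the construction.
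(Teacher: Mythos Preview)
Your overall strategy is the paper's: associate to each minimal covering set $T$ of $I$ the independent set $\J=\dex(\Sigma,T,I)$, observe $|\J|\le k$, and argue that $T$ is recoverable from $\J$ (and $I$), so the count is at most $\sum_{i\le k}\binom{|\Sigma|}{i}\le|\Sigma|^k$. The gap is in your recovery mechanism. You try to show $T\subseteq\phi(P(\J))$ and then extract $T$ as a canonical independent subset of $\phi(P(\J))$ via an $\ex$-style rule. Both steps fail. Take $X=\{a,b,c\}$, $\Sigma=\{c\to a\}$, $I=\{a,b\}$, $T=\{b,c\}$: here $T$ is a minimal covering set of $I$, $\J=\{c\to a\}$, $P(\J)=\{c\}$, $\phi(P(\J))=\{a,c\}$, yet $b\in T$ and $b\notin\phi(P(\J))$; likewise your asserted inclusion $I\subseteq\phi(P(\J))$ is false here. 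The point you are missing is that elements of $T\cap I$ need not be there to ``fire'' any implication of $\J$ at all --- they can sit in $T$ simply because they lie in $I$ and nothing derives them.

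The paper's recovery avoids $\phi(P(\J))$ entirely and instead splits $T$ as $(T\setminus I)\cup(T\cap I)$. First it argues $T\setminus I=P(\J)\setminus I$: every $x\in T\setminus I$ must occur in some premise of $\J$ by minimality of $T$, and conversely every premise of $\J$ lies in $T$ by definition. Second, it uses independence of $T$ to recover $T\cap I$ from $T\setminus I$ alone: since $T$ is independent, no $x\in T\cap I$ lies in $\phi(T\setminus\{x\})\supseteq\phi(T\setminus I)$, so $T\cap I\subseteq I\setminus\phi(T\setminus I)$, and the reverse containment comes from $I\subseteq\phi(T)$ together with minimality. Thus $T$ is determined by $P(\J)\setminus I$, hence by $\J$, which gives the injection you want. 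Replacing your $\phi(P(\J))$/$\ex$ argument by this decomposition closes the gap.
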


\begin{proof}
	Let $I\subseteq X$ and $T\subseteq X$ be a minimal covering set of~$I$.
	Consider the implications in $\dex(\Sigma,T,I)$.
	As $\dex(\Sigma,T,I)$ is an independent set of implications, ${|\dex(\Sigma,T,I)|\leq k}$.
	Observe in addition that every $x\in T\setminus I$ belongs to at least one premise of an implication in $\dex(\Sigma,T,I)$, as otherwise $T$ is not a minimal covering set of $I$.
	Furthermore by definition, every $x$ that belongs to the premise of an implication in $\dex(\Sigma,T,I)$ is in $T$.
	Hence, every such $x$ is either in $I$ or in $T\setminus I$.
	We conclude that $T\setminus I=\bigcup\{A \mid A\rightarrow b \in \dex(\Sigma,T,I)\}\setminus I$.
	Now, observe that $T$ is uniquely characterized by $T\setminus I$ as $T$ is independent: the elements of $T\cap I$ are exactly those of $I\setminus\phi(T\setminus I)$.
	Since $T\setminus I$ is obtained by union of at most $k$ implications in $\Sigma$, the number of minimal covering sets of~$I$ is bounded by
	\[
		\sum_{i=1}^{k} \binom{|\Sigma|}{i}
	\]
	hence by $|\Sigma|^k$, as desired.
\end{proof}

A corollary of Lemma~\ref{lem:transversal-indmingen} and Theorem~\ref{thm:bound-indmingen} is the following, observing that every solution $I\in \B^-$ admits at most $|\Sigma|^k$ minimal covering sets, hence that at most $|\Sigma|^k$ minimal transversals of $\H$ have their closure containing $I$.

\begin{corollary}\label{cor:bound-tr}
	If $\Sigma$ is of independent-width $k$ then $|Tr(\H)|\leq |\Sigma|^k\cdot |\B^-|$.
\end{corollary}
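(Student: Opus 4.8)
The plan is to combine the structural facts already established—Lemma~\ref{lem:T-closure-1}, Lemma~\ref{lem:transversal-indmingen}, and Theorem~\ref{thm:bound-indmingen}—via a simple fiber-counting argument. The key idea is to route every minimal transversal of $\H$ through a well-chosen solution $I \in \B^-$ for which it is a minimal covering set, and then bound the number of minimal transversals landing on each fixed $I$.

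First I would define a map $f \colon Tr(\H) \to \B^-$ as follows. Given $T \in Tr(\H)$, Lemma~\ref{lem:T-closure-1} guarantees that there is some $I \in \B^-$ with $I \subseteq \phi(T)$; pick one such $I$ and set $f(T) = I$ (breaking ties arbitrarily). This map is well defined, and it partitions $Tr(\H)$ into the fibers $f^{-1}(I)$ for $I \in \B^-$, so that
\[
	|Tr(\H)| = \sum_{I \in \B^-} |f^{-1}(I)|.
\]

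Next I would bound each fiber. Fix $I \in \B^-$ and let $T \in f^{-1}(I)$. By construction $T$ is a minimal transversal of $\H$ with $I \subseteq \phi(T)$, so Lemma~\ref{lem:transversal-indmingen} applies and tells us that $T$ is a minimal covering set of $I$. Hence $f^{-1}(I)$ injects into the set of minimal covering sets of $I$, which by Theorem~\ref{thm:bound-indmingen} has size at most $|\Sigma|^k$, where $k$ is the independent-width of $\Sigma$. Plugging $|f^{-1}(I)| \leq |\Sigma|^k$ into the displayed sum over the $|\B^-|$ solutions yields $|Tr(\H)| \leq |\Sigma|^k \cdot |\B^-|$, as claimed.

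There is no real obstacle here beyond assembling the pieces correctly: every ingredient has been proved above. The one point to be careful about is that a single minimal transversal might be a minimal covering set of several distinct members of $\B^-$; this is harmless, since the map $f$ commits each $T$ to exactly one $I$, so the fibers are genuinely disjoint and the counting is valid. (One could equally avoid choosing $f$ and instead observe directly that $Tr(\H) \subseteq \bigcup_{I \in \B^-} \{T : T \text{ is a minimal covering set of } I\}$, with each set in the union of size at most $|\Sigma|^k$, which gives the same bound.)
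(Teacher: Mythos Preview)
Your proposal is correct and mirrors the paper's own one-line justification: route each minimal transversal to some $I\in\B^-$ via Lemma~\ref{lem:T-closure-1}, use Lemma~\ref{lem:transversal-indmingen} to see it is a minimal covering set of that $I$, and bound each fiber by $|\Sigma|^k$ via Theorem~\ref{thm:bound-indmingen}. Your write-up is simply more explicit about the fiber-counting, but the argument is the same.
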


As a consequence, the size of $Tr(\H)$ is bounded by a polynomial in $|X|+|\Sigma|+|\B^+|+|\B^-|$ whenever the implicational base is of bounded independent-width.
Hence under such a condition, it is still reasonable to test each of the minimal transversals generated by Algorithm~\ref{algo:main} even though many may not lead to a solution of $\B^-\!$.
We conclude with the following theorem.

\begin{theorem}\label{thm:bounded-main}
	There is an algorithm that, for every integer $k$, given an implicational base $(X,\Sigma)$ such that $\Sigma$ is of independent-width $k$, and an antichain $\B^+$ of $\L(\Sigma)$, enumerates the dual antichain $\B^-$ of $\B^+$ in $\L(\Sigma)$ in output quasi-polynomial time $N^{o(\log N)}$ where $N=|X|+|\Sigma|+|\B^+|+|\B^-|$.
\end{theorem}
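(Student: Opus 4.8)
The plan is to show that Algorithm~\ref{algo:main}, run verbatim and without any knowledge of $k$, already achieves the claimed bound; only the analysis changes, once we invoke Corollary~\ref{cor:bound-tr}. I would organize the argument in two parts: correctness, then running time.

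For correctness I would simply assemble the structural lemmas already established. Lemma~\ref{lem:T-closure-2} guarantees that for every $I\in\B^-$ the set $\ex(I)$ is a minimal transversal of $\H$ with $\phi(\ex(I))=I$, so the iteration of the \textbf{for} loop reaching $T=\ex(I)$ passes both tests and outputs $I$; conversely every output $I=\phi(T)$ has been explicitly verified to lie in $\B^-$, and since $\ex(I)$ is a uniquely determined set, the guard $T=\ex(I)$ makes each $I$ be emitted exactly once. Lemma~\ref{lem:T-closure-1} is what makes this loop meaningful in the first place, being used implicitly when one claims $\phi(T)\not\in\,\downarrow\!\B^+$. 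Nothing beyond citing these lemmas is needed here.

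For the running time I would start from the estimate already recorded after Algorithm~\ref{algo:main}: building $\H$ costs $O(|X|\cdot|\B^+|)$, enumerating $Tr(\H)$ with the Fredman--Khachiyan algorithm costs $M^{o(\log M)}$ with $M=|\H|+|Tr(\H)|$, and the per-transversal bookkeeping (computing $\phi(T)$, testing $\phi(T)\in\B^-$, and computing $\ex(\phi(T))$) costs $O(|X|^2\cdot(|\Sigma|+|\B^+|))$ per transversal, for a total
\[
	O(|X|\cdot|\B^+|)+M^{o(\log M)}+|Tr(\H)|\cdot O(|X|^2\cdot(|\Sigma|+|\B^+|)).
\]
Now I would feed in Corollary~\ref{cor:bound-tr}: since $\Sigma$ has independent-width $k$ we have $|Tr(\H)|\le|\Sigma|^k\cdot|\B^-|\le N^{k+1}$ and $|\H|=|\B^+|\le N$, so $M\le N+N^{k+1}\le 2N^{k+1}$. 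Thus $M$ is polynomially bounded in $N$ for fixed $k$, the first and third summands above are polynomial in $N$, and substituting $M\le 2N^{k+1}$ into $M^{o(\log M)}$ gives $2^{o(\log M)\cdot\log M}\le 2^{o((\log N)^2)}=N^{o(\log N)}$, where $\log M=O(\log N)$ absorbs the exponent $k+1$ into the hidden constant. Since a polynomial in $N$ is itself $N^{o(\log N)}$ and this class is closed under addition, the whole running time is $N^{o(\log N)}$, as claimed.

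The conceptual work --- bounding how many minimal transversals of $\H$ can collapse onto a single solution of $\B^-$ --- is already done in Lemma~\ref{lem:transversal-indmingen} and Theorem~\ref{thm:bound-indmingen}, so the only genuinely delicate point left is the composition step: one must check carefully that substituting a polynomial bound $M\le\poly(N)$ inside the quasi-polynomial factor $M^{o(\log M)}$ still yields $N^{o(\log N)}$, and that the polynomial per-solution overhead is absorbed by that class. I would write this estimate out in full rather than leave it implicit, since ``$o(\log)$'' bounds behave badly under careless composition; everything else is routine.
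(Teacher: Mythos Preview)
Your proposal is correct and follows essentially the same route as the paper: both invoke Corollary~\ref{cor:bound-tr} to bound $|Tr(\H)|$ polynomially in $N$, deduce $M\le N^c$ for some constant $c$ depending on $k$, and then substitute into the running-time estimate $M^{o(\log M)}+|Tr(\H)|\cdot O(|X|^2\cdot(|\Sigma|+|\B^+|))$ to obtain $N^{o(\log N)}$. Your version is a bit more explicit about the correctness argument and the composition of the quasi-polynomial bound, but the substance is the same.
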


\begin{proof}
	Let $k$ be an integer and $(X,\Sigma)$ be an implicational base of independent-width $k$.
	Let $\B^+$ be an antichain of $\L(\Sigma)$, and $\H=\{X\setminus B \mid B\in \B^+\}$ be the complementary hypergraph of $\B^+\!$.
	Let $\B^-$ be the dual antichain of $\B^+$ in $\L(\Sigma)$ that we wish to compute.
	By Corollary~\ref{cor:bound-tr}, the size of $Tr(\H)$ is bounded by $|X|^k\cdot|\B^-|$.
	Let $M=|\H|+|Tr(\H)|$ and $N=|X|+|\Sigma|+|\B^+|+|\B^-|$.
	Since $|\H|\leq |\B^+|$, there exists a constant $c\in \mathbb{N}$ depending in $k$ such that $M\leq N^c$.
	As a consequence, using the algorithm of Fredman and Khachiyan, the running time of Algorithm~\ref{algo:main} on instance $(X,\Sigma,\B^+)$ is bounded by 
	\[
		M^{o(\log M)}+|Tr(\H)|\cdot O(|X|^2 \cdot (|\B^+| + |\Sigma|))
	\]
	hence by
	\[
		N^{c\,\cdot\,o(\log N^c)}+\poly(N)=N^{o(\log N)}.
	\]
\end{proof}

As a corollary, there is a quasi-polynomial time algorithm solving \Dual{} in lattices given by implicational bases of bounded independent-width.
In particular if the dimension and the independent-width of $\Sigma$ equal one, Theorem~\ref{thm:bounded-main} yields an output quasi-polynomial time algorithm solving \DualEnum{} in distributive lattices coded by the ideals of an interval order (see~Section~\ref{sec:prelim}).

\section{Conclusion and future work}\label{sec:concl}

In this paper, we showed that the dualization in lattices given by implicational bases is impossible in output-polynomial time unless {\sf P$=$NP}, even when the premises in the implicational base are of dimension two.
Then, we~showed using hypergraph dualization that the problem admits an output quasi-polynomial time algorithm whenever the implicational base has bounded independent-width.
Lattices that share this property include distributive lattices coded by the ideals of an interval order when both the independent-width and the dimension of the implicational base equal one.

We state open problems for future research.
To an implicational base $(X,\Sigma)$ we associate its {\em implication-graph} $G(\Sigma)$ as the directed graph on vertex set $X$ and where there is an arc from $x$ to $y$ if there exists $A\rightarrow b\in \Sigma$ such that $x\in A$ and $y=b$.
An implicational base $(X,\Sigma)$ is called {\em acyclic} if $G(\Sigma)$ has no directed cycle.
Acyclic implicational bases have been widely studied in the literature~\cite{hammer1995quasi,boros2009subclass,wild2017joy}.
Observe that the negative result of Section~\ref{sec:npc} involves an implicational base which is cyclic.
Consequently, an important research direction concerns the dualization in lattices given by acyclic implicational bases.
Subclasses of interest include distributive lattices as we recall that the best known algorithm for the dualization in that case is output sub-exponential~\cite{babin2017dualization}.
Superclasses of interest that are not covered by Theorem~\ref{thm:npc} include convex geometries.

\section*{Acknowledgement}

The first author would like to thank Simon Vilmin for extensive discussions on the topic of this paper.

\bibliographystyle{alpha}
\bibliography{main}

\end{document}